\documentclass[aps,pra,preprint,superscriptaddress,nofootinbib,longbibliography]{revtex4-2}

\usepackage{amsmath,amssymb,mathtools,amsthm}
\usepackage{bm}
\usepackage{graphicx}
\usepackage{microtype}
\usepackage{placeins}
\usepackage[colorlinks=true,linkcolor=blue,citecolor=blue,urlcolor=blue]{hyperref}
\hypersetup{
  pdftitle={Classical codes violate the conjectured square-root bound for quantum random access codes},
  pdfauthor={Kangqiao Liu}
}

\newtheorem{theorem}{Theorem}
\newtheorem{proposition}{Proposition}
\newtheorem{corollary}{Corollary}
\newenvironment{proofsketch}
  {\begin{proof}[Proof sketch]}
  {\end{proof}}

\newcommand{\bits}{\{0,1\}}
\newcommand{\Tr}{\operatorname{Tr}}
\newcommand{\hbin}{h_2}
\newcommand{\ket}[1]{|#1\rangle}
\newcommand{\bra}[1]{\langle #1|}
\newcommand{\proj}[1]{|#1\rangle \langle #1|}

\begin{document}

\title{Classical codes violate the conjectured square-root bound for quantum random access codes}

\author{Kangqiao Liu}
\email{kqliu@xhu.edu.cn}
\affiliation{School of Science, Xihua University, Chengdu 610039, China}
\affiliation{Key Laboratory of High Performance Scientific Computation, Xihua University, Chengdu 610039, China}

\begin{abstract}
We consider whether every quantum random access code (QRAC) with density-operator encodings and arbitrary decoding measurements obeys the conjectured bound $p\leq(1+\sqrt{m/n})/2$, where $n$ classical bits are encoded into $m$ qubits and $p$ is the worst-case success probability.  We find that classical random access codes with private randomness, which form a subclass of this QRAC model, violate the bound.  We embed these classical codes as QRACs with diagonal encoding states and commuting decoding measurements, and construct pure-state realizations with identical decoding statistics.  The achievability theorem of Ambainis, Nayak, Ta-Shma, and Vazirani then yields violations for every fixed $p\in(1/2,1)$ at sufficiently large input length.  The counterexamples span the full open interval between the conjectured and Nayak bounds at each fixed compression rate.  A finite-blocklength analysis further yields order-optimal logarithmic qubit scaling for a recovery bias scaling as $\sqrt{\log_2 n/n}$ with a sufficiently large prefactor.  These results identify the classical coding rate as the source of the separation and motivate restricted bounds based on quantitative spectral properties of decoding measurements.
\end{abstract}

\maketitle

\section{Introduction}
\label{sec:introduction}

Wiesner's quantum multiplexing protocol provides the earliest two-to-one instance of the selective-retrieval task underlying quantum random access codes (QRACs)~\cite{Wiesner1983}.  Ambainis, Nayak, Ta-Shma, and Vazirani (ANTV) introduced the general QRAC model, in which a sender compresses a string into one quantum message from which a receiver can later retrieve any chosen coordinate~\cite{AmbainisNayakTaShmaVazirani1999,AmbainisNayakTaShmaVazirani2002}.  Entropic QRAC bounds have supplied space lower bounds for one-way quantum finite automata and, through a quantum reduction, exponential lower bounds for two-query locally decodable codes~\cite{Nayak1999,KerenidisDeWolf2004}.  Operationally, QRAC decoding probabilities form dimension-constrained prepare-and-measure correlations.  This common structure allows observed statistics to certify a minimum Hilbert-space dimension~\cite{WehnerChristandlDoherty2008,GallegoBrunnerHadleyAcin2010} and supports semi-device-independent quantum key distribution and randomness certification~\cite{PawlowskiBrunner2011,LiEtAl2012}.  With parity-obliviousness imposed, the same task turns quantum advantage into an operational witness of preparation contextuality~\cite{SpekkensEtAl2009}.  Sharp QRAC success bounds therefore connect asymptotic information rates with finite-dimensional quantum correlations.

We study the binary worst-case formulation of this task.  Let $[n]=\{1,\ldots,n\}$.  Following Man\v{c}inska and Storgaard~\cite{MancinskaStorgaard2022}, an $(n,m,p)$ QRAC encodes each $x\in\bits^n$ as a density operator $\rho_x$ on a $2^m$-dimensional Hilbert space.  For every $i\in[n]$, a two-outcome positive operator-valued measure (POVM) $\{D_i^0,D_i^1\}$ is used to decode $x_i$, and the worst-case success condition is
\begin{equation}
 \min_{x\in\bits^n}\min_{i\in[n]}\Tr \left(\rho_xD_i^{x_i}\right)\ge p ,
 \label{eq:qrac-success}
\end{equation}
where $\Tr$ denotes the trace.
Throughout, we consider $1/2<p<1$ and $m<n$.  This general model permits mixed encoding states and decoding POVMs with mutually commuting effects.

For the shared-randomness model of Ref.~\cite{AmbainisLeungMancinskaOzols2008}, Man\v{c}inska and Storgaard proved $p\leq 1/2+\sqrt{2^{m-1}/n}/2$ and confirmed the two-qubit conjecture of Imamichi and Raymond~\cite{MancinskaStorgaard2022,ImamichiRaymond2018}.  They asked whether the stronger expression
\begin{equation}
 p\leq \frac12+\frac12\sqrt{\frac{m}{n}}
 \label{eq:conjbound}
\end{equation}
holds for arbitrary $(n,m,p)$ QRACs.  The conjectured bound is saturated by repeated concatenations of the optimal $(2,1)$ and $(3,1)$ QRACs, and it lies below Nayak's information-theoretic bound~\cite{Nayak1999,MancinskaStorgaard2022}.  Recent works use the conjectured bound as a reference for finite-size constructions~\cite{Suzuki2026,AkibueRaymondTamakiTeramoto2026}.  Kondo \textit{et al.} explicitly record its general validity as open~\cite{KondoEtAl2026}.

Prior work establishes the embedding of classical random access codes (RACs) into this general model.  Liab{\o}tr{\o} describes classical RACs as QRACs whose encoding states are diagonal in a fixed basis~\cite{Liabotro2017}.  Lin and de Wolf observe that the asymptotically optimal classical message length is achievable within this diagonal subclass~\cite{LinDeWolf2025}.  Kondo \textit{et al.} formulate stochastic classical RACs as diagonal QRACs~\cite{KondoEtAl2026}.

This paper constructs counterexamples to Eq.~\eqref{eq:conjbound} within the general QRAC model.  The construction combines this diagonal sector with the ANTV achievability theorem~\cite{AmbainisNayakTaShmaVazirani1999,AmbainisNayakTaShmaVazirani2002}.  Let $\hbin$ denote the binary entropy and write $s=2p-1$ for the recovery bias.  At fixed $p$, ANTV give classical RACs with private randomness, deterministic decoders, and message length $(1-\hbin(p))n+O(\log_2 n)$ bits.  The binary entropy obeys
\begin{equation}
 1-\hbin(p)<s^2 .
 \label{eq:entropy-gap-intro}
\end{equation}
The strict gap implies that the embedded ANTV codes violate Eq.~\eqref{eq:conjbound} for all sufficiently large $n$.  At each fixed compression rate, these codes fill the open interval above Eq.~\eqref{eq:conjbound} shown in Fig.~\ref{fig:window}; Nayak's bound $m\geq(1-\hbin(p))n$~\cite{Nayak1999} supplies its asymptotic upper endpoint.

The analysis establishes a finite-alphabet embedding and an exact classical characterization of QRACs with commuting decoding POVMs.  The same decoding statistics admit realizations with diagonal encoding states and with pure encoding states.  Strict violations persist in relative open neighborhoods and occur for projective decoding measurements whose effects are noncommuting across every pair of distinct decoding indices.  The finite-blocklength analysis also yields counterexamples with logarithmic qubit count in a vanishing-bias regime.  Thus the conjectured bound is incompatible with classical RACs with private randomness already included in the general QRAC model.

The remainder of the paper is organized as follows.  Section~\ref{sec:embedding} formulates the classical RAC embedding and characterizes QRAC statistics generated by commuting decoding POVMs.  The ANTV rate is applied in Sec.~\ref{sec:antv-counterexamples} to obtain the fixed-bias, finite-blocklength, and logarithmic-qubit counterexample families.  A compression-rate parametrization in Sec.~\ref{sec:rate-window} determines the full asymptotic counterexample window.  Robustness of the construction under perturbations of states and decoding measurements is established in Sec.~\ref{sec:robustness}.  Section~\ref{sec:examples} illustrates the formal results in the $(r,p)$ plane and through explicit finite-blocklength parameters.  Their asymptotic interpretation and relation to current QRAC bounds are developed in Sec.~\ref{sec:discussion}.  Complete proofs of all formal results and finite-blocklength details are collected in Appendix~\ref{app:proofs}.

\section{Embedding classical RACs}
\label{sec:embedding}

The construction begins with a precise formulation of the classical sector contained in the QRAC model.  A stochastic classical RAC with private randomness uses a finite message alphabet $\mathcal C$ and assigns a message distribution $P(c|x)$ to each input $x\in\bits^n$.  Given a message $c$ and a requested index $i\in[n]$, the decoder outputs $b\in\bits$ with probability $q_i(b|c)$, where $q_i(b|c)\geq0$ and $\sum_{b\in\bits}q_i(b|c)=1$.  Its worst-case success condition is
\begin{equation}
 \sum_{c\in\mathcal C}P(c|x)q_i(x_i|c)\ge p,
 \quad x\in\bits^n,
 \quad i\in[n].
 \label{eq:classical-success-general}
\end{equation}
The conditional distributions $P(c|x)$ specify the stochastic encoder.  A deterministic decoder is given by decoding functions $d_i:\mathcal C\to\bits$ through $q_i(b|c)=\mathbf 1\{d_i(c)=b\}$, where $\mathbf 1$ denotes the indicator function.

This setup leads to the embedding and commuting-POVM characterization below.

\begin{proposition}[Classical RAC embedding]
\label{prop:embedding}
Any classical RAC with private randomness on $n$ input bits, finite message alphabet $\mathcal C$, and worst-case success at least $p$ yields two $M=\lceil\log_2 |\mathcal C|\rceil$-qubit QRAC realizations with the same decoding statistics.  One has diagonal encoding states and the other has pure encoding states.  Deterministic decoders yield projective decoding measurements.  Both realizations violate Eq.~\eqref{eq:conjbound} whenever
\begin{equation}
 \frac{M}{n}<s^2 .
 \label{eq:general-criterion}
\end{equation}
\end{proposition}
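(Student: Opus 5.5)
The plan is to build both realizations explicitly and then check the violation criterion by direct algebra.

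\emph{Diagonal realization.} Set $M=\lceil\log_2|\mathcal C|\rceil$, so that $|\mathcal C|\le 2^M$. Inject $\mathcal C$ into the computational basis of $(\mathbb C^2)^{\otimes M}$ as $c\mapsto\ket{c}$. Define
\[
\rho_x=\sum_c P(c|x)\proj{c}, \qquad D_i^b=\sum_c q_i(b|c)\proj{c}+\tfrac12\delta\Pi_\perp .
\]
Here $\Pi_\perp$ is the projector onto the unused basis vectors and $\delta\in\{0,1\}$ fixes completeness. In the deterministic case it is cleaner to assign the unused vectors to outcome $0$, which keeps the measurement projective. Each $D_i^b$ is positive semidefinite, and $D_i^0+D_i^1=I$. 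Since $\rho_x$ is supported on the used vectors,
\[
\Tr(\rho_x D_i^{x_i})=\sum_c P(c|x)q_i(x_i|c),
\]
so Eq.~\eqref{eq:classical-success-general} transfers directly to Eq.~\eqref{eq:qrac-success}. If $q_i(b|c)=\mathbf 1\{d_i(c)=b\}$, the effects are sums of basis projectors, hence projective.

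\emph{Pure realization.} The step needing care is reproducing the same statistics with pure encoding states without enlarging the dimension beyond $2^M$. My first try is the state $\ket{\psi_x}=\sum_c\sqrt{P(c|x)}\ket{c}$ paired with the diagonal POVM. This fails, because $\bra{\psi_x}D\ket{\psi_x}$ picks up the diagonal entries only when $D$ is diagonal. Diagonal $D$ does give exactly $\sum_c P(c|x)D_{cc}$, and that argument works for any phases. So keeping the same diagonal (commuting) decoding POVMs, the pure states reproduce all statistics, since $\Tr(\proj{\psi_x}D_i^b)=\sum_c P(c|x)q_i(b|c)$. The first try therefore succeeds after all. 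Dimension stays $2^M$, and projectivity is inherited in the deterministic case. I expect no further obstacle beyond handling the padding of unused basis states consistently.

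\emph{Violation criterion.} Both realizations are $(n,M,p)$ QRACs with $M<n$ whenever $M/n<s^2\le 1$. Eq.~\eqref{eq:conjbound} reads $p\le\tfrac12+\tfrac12\sqrt{M/n}$, that is, $s\le\sqrt{M/n}$. Squaring is legitimate because both sides are nonnegative (the standing assumption $p>1/2$ gives $s>0$), so the bound is equivalent to $s^2\le M/n$. Hence Eq.~\eqref{eq:general-criterion} gives a strict violation.

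Finally, I would add a remark that if the code's worst-case success exceeds $p$, we may take $p$ to be the exact worst-case value. This only strengthens the violation.
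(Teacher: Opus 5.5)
Your proposal is correct and is essentially the paper's proof: the same computational-basis injection with diagonal $\rho_x$ and diagonal effects, the same square-root-amplitude pure states $\ket{\psi_x}=\sum_c\sqrt{P(c|x)}\ket c$ paired with the unchanged diagonal POVMs, and the same squaring argument for the criterion (your observation that $M/n<s^2<1$ also gives $M<n$ is a useful addition). The remaining fixes are presentational. The paper assigns $\Pi_\perp$ to outcome $0$ for every decoder, which makes your $\tfrac12\delta\Pi_\perp$ variant and the deterministic/stochastic case split unnecessary, and your ``this fails \ldots{} succeeds after all'' passage should be replaced by a direct statement of the diagonality argument.
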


\begin{proofsketch}
Embedding $P(c|x)$ and $q_i(b|c)$ as diagonal entries of encoding states and POVM effects makes the Born rule reproduce the classical decoding probabilities, with deterministic decoders giving projective decoding measurements through $\{0,1\}$-valued effect eigenvalues.  Replacing $P(c|x)$ by amplitudes $\sqrt{P(c|x)}$ gives the pure-state realization, and Eq.~\eqref{eq:general-criterion} is equivalent to a strict violation of Eq.~\eqref{eq:conjbound}.
\end{proofsketch}

\begin{proposition}[Classical characterization of commuting decoding POVMs]
\label{prop:converse}
Every $(n,M,p)$ QRAC for which all effects of the decoding POVMs commute has the same decoding statistics as a classical RAC with private randomness and a message alphabet of size at most $2^M$.
\end{proposition}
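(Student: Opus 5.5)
The plan is to simultaneously diagonalize all decoding effects and read off a classical code from the diagonal. By hypothesis, the family $\{D_i^b : i\in[n],\, b\in\bits\}$ consists of pairwise commuting Hermitian (indeed positive semidefinite) operators on a $d=2^M$-dimensional Hilbert space. A finite commuting family of Hermitian operators admits a common orthonormal eigenbasis. So I will fix such a basis $\{\ket{c}\}_{c=1}^{d}$ and write
\begin{equation}
 D_i^b=\sum_{c=1}^{d}\lambda_i^b(c)\,\proj{c}.
\end{equation}
Since each $D_i^b$ is positive semidefinite, every eigenvalue satisfies $\lambda_i^b(c)\ge0$. Since $D_i^0+D_i^1$ is the identity, and this is diagonal in any basis, we also get $\lambda_i^0(c)+\lambda_i^1(c)=1$ for every $c$. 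I then set $q_i(b|c)=\lambda_i^b(c)$, which is a valid stochastic decoder on the alphabet $\mathcal C=\{1,\ldots,d\}$. This alphabet has size exactly $2^M$, or at most $2^M$ after discarding unused labels.

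For the encoder, I define $P(c|x)=\bra{c}\rho_x\ket{c}$, the diagonal of $\rho_x$ in this common eigenbasis. Positivity of $\rho_x$ gives $P(c|x)\ge0$, and unit trace gives $\sum_c P(c|x)=1$, so each $P(\cdot|x)$ is a probability distribution. The key identity is that the off-diagonal entries of $\rho_x$ drop out of the Born rule because $D_i^b$ is diagonal:
\begin{equation}
 \Tr\left(\rho_x D_i^b\right)=\sum_{c}\lambda_i^b(c)\bra{c}\rho_x\ket{c}=\sum_c P(c|x)\,q_i(b|c).
\end{equation}
This holds for all $x$, $i$ and $b$. Hence the classical RAC reproduces the full decoding statistics of the QRAC, not only the success probabilities. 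In particular, the worst-case condition Eq.~\eqref{eq:qrac-success} transfers verbatim to Eq.~\eqref{eq:classical-success-general} with the same $p$. Equivalently, $\rho_x$ may be replaced by its pinching onto the eigenbasis, which is the diagonal realization of Proposition~\ref{prop:embedding} run in reverse.

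The only step needing care is the existence of the common eigenbasis. The point is that the hypothesis is commutativity of all effects across all indices $i$, not merely within each POVM. Within a single two-outcome POVM, commutativity is automatic since $D_i^1=I-D_i^0$, so the hypothesis must be read as cross-index commutativity. With that reading, the standard simultaneous diagonalization theorem for commuting normal operators applies directly. I can prove it by induction on the number of operators: restrict to each eigenspace of the first operator, which every other operator preserves, and diagonalize there. Nothing else is an obstacle. I will also remark that if the effects are projectors, the eigenvalues lie in $\{0,1\}$, so the resulting classical decoder is deterministic.
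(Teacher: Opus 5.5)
Your proposal is correct and follows essentially the same route as the paper's proof: take a common eigenbasis of the commuting effects, set $P(c|x)=\bra c\rho_x\ket c$ and $q_i(b|c)=\bra c D_i^b\ket c$, and note that diagonality of the effects makes the Born rule equal the classical decoding probabilities. Your extra remarks (commutativity must be read across indices, and projective effects give deterministic decoders) are accurate.
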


\begin{proofsketch}
Simultaneous diagonalization turns the diagonal entries of each encoding state into $P(c|x)$ and the effect eigenvalues into $q_i(b|c)$.  The resulting classical model reproduces every decoding probability.
\end{proofsketch}

Together, Propositions~\ref{prop:embedding} and \ref{prop:converse} characterize the QRAC statistics generated by commuting decoding POVMs as those of classical RACs with private randomness and at most $2^M$ messages.  Equation~\eqref{eq:general-criterion} therefore reduces the counterexample construction in this sector to a classical message-rate comparison.

\section{Counterexamples from the ANTV rate}
\label{sec:antv-counterexamples}

The ANTV achievability theorem supplies the classical message rate required by Eq.~\eqref{eq:general-criterion}.  Theorem 2.2 of Ref.~\cite{AmbainisNayakTaShmaVazirani1999} (see also Theorem 5.2 of Ref.~\cite{AmbainisNayakTaShmaVazirani2002}) gives classical RACs with private randomness on $n$ input bits, deterministic decoders, worst-case success at least $p$, and message length $(1-\hbin(p))n+O(\log_2 n)$ bits.  Let $k_n$ denote this bit length.  The ANTV construction obeys the finite-blocklength upper bound
\begin{equation}
 k_n\leq \left\lceil (1-\hbin(p))n+7\log_2 n\right\rceil
 \label{eq:ANTV-length}
\end{equation}
for all sufficiently large $n$.  Proposition~\ref{prop:embedding} embeds the $k_n$-bit messages into $k_n$ qubits and supplies diagonal-state and pure-state QRAC realizations with the same commuting projective decoding measurements.  Combining the ANTV rate with this embedding yields the fixed-bias counterexample family stated next.

\begin{theorem}[Fixed-bias counterexamples]
\label{thm:fixed-p}
For every fixed $p$ there is an $n_0(p)$ such that, for every $n\ge n_0(p)$, there exists an $(n,k_n,p)$ QRAC with $p>1/2+\sqrt{k_n/n}/2$, where $k_n<n$ satisfies Eq.~\eqref{eq:ANTV-length}.
\end{theorem}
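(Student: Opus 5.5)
The plan is to combine the ANTV achievability bound Eq.~\eqref{eq:ANTV-length} with Proposition~\ref{prop:embedding}, and then check criterion Eq.~\eqref{eq:general-criterion} for large $n$. Fix $p\in(1/2,1)$ and set $s=2p-1\in(0,1)$. For each sufficiently large $n$, ANTV supply a classical RAC with private randomness, deterministic decoders, worst-case success at least $p$, and messages of $k_n$ bits, where $k_n$ obeys Eq.~\eqref{eq:ANTV-length}. Taking $\mathcal C=\bits^{k_n}$, so that $M=k_n$, Proposition~\ref{prop:embedding} gives an $(n,k_n,p)$ QRAC. This QRAC violates Eq.~\eqref{eq:conjbound} as soon as $k_n/n<s^2$, because this is equivalent to $\sqrt{k_n/n}<s$, that is, $p>1/2+\sqrt{k_n/n}/2$.

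The analytic core is the strict entropy gap Eq.~\eqref{eq:entropy-gap-intro}, namely $1-\hbin(p)<s^2$ for $s\in(0,1)$. I would prove it via the expansion
\begin{equation*}
1-\hbin\!\left(\tfrac{1+s}{2}\right)=\frac{1}{\ln 2}\sum_{j\ge1}\frac{s^{2j}}{2j(2j-1)}.
\end{equation*}
Every coefficient satisfies $\frac{1}{\ln 2}\cdot\frac{1}{2j(2j-1)}\le \frac{1}{2\ln 2}<1$, and their sum equals $1$ because at $s=1$ the left side is $1$. Since $0<s<1$, we have $s^{2j}<s^2$ for all $j\ge2$. Hence the series is strictly less than $s^2\sum_j c_j=s^2$.

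An alternative route is to define $g(s)=s^2-1+\hbin((1+s)/2)$, observe that $g(0)=g(1)=0$, and use concavity or convexity. The series argument is cleaner and is the one I would commit to. I expect this inequality to be the main step that needs care; everything else is bookkeeping.

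Let $\delta=s^2-(1-\hbin(p))>0$. Then Eq.~\eqref{eq:ANTV-length} gives
\begin{equation*}
\frac{k_n}{n}\le (1-\hbin(p))+\frac{7\log_2 n+1}{n}.
\end{equation*}
This is below $s^2$ once $(7\log_2 n+1)/n<\delta$, which holds for all $n$ beyond some threshold, since $(\log_2 n)/n\to0$. The same estimate gives $k_n/n<s^2<1$, so $k_n<n$. I would choose $n_0(p)$ to be the larger of this threshold and the threshold beyond which the ANTV bound Eq.~\eqref{eq:ANTV-length} is valid. Then for every $n\ge n_0(p)$ the embedded code is an $(n,k_n,p)$ QRAC with $p>1/2+\sqrt{k_n/n}/2$, which proves the theorem.
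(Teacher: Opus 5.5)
Your proposal is correct and follows essentially the same route as the paper: the same series proof of the entropy gap, the same estimate $k_n/n\le 1-\hbin(p)+(7\log_2 n+1)/n<s^2<1$ once $(7\log_2 n+1)/n$ falls below $s^2-[1-\hbin(p)]$, and the same appeal to Proposition~\ref{prop:embedding}. The only difference is how the coefficient sum equals $1$: you evaluate at $s=1$, which needs Abel's theorem or monotone convergence (valid here because the coefficients are nonnegative), while the paper sums the telescoped terms $(2j-1)^{-1}-(2j)^{-1}$ directly to $\ln 2$.
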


\begin{proofsketch}
For fixed $p$, Eq.~\eqref{eq:entropy-gap-intro} and $\log_2 n/n\to0$ eventually give $k_n/n<s^2$.  Proposition~\ref{prop:embedding} then yields the counterexample.
\end{proofsketch}

For explicit parameters, this construction yields the following criterion.

\begin{proposition}[Finite-blocklength counterexamples]
\label{prop:finite-block}
Let $n$ satisfy $p\leq1-1/n$, and define
\begin{equation}
 K_{p,n}:=\left\lceil(1-\hbin(p))n+7\log_2 n\right\rceil .
 \label{eq:finite-antv-bound}
\end{equation}
If $K_{p,n}<ns^2$, there exists an $(n,K_{p,n},p)$ QRAC that violates Eq.~\eqref{eq:conjbound}.
\end{proposition}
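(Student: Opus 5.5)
The plan is to combine an explicit finite-$n$ version of the ANTV achievability bound with Proposition~\ref{prop:embedding}, and then to pad the message length up to exactly $K_{p,n}$.

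The first step is to show that for every $n$ with $p\le 1-1/n$ there is a classical RAC with private randomness and a deterministic decoder. It should encode $n$ bits with worst-case success at least $p$ and use messages of length $k\le K_{p,n}$ bits. Equation~\eqref{eq:ANTV-length} gives this only ``for all sufficiently large $n$'', so here I would re-run the ANTV construction with explicit constants.

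Following Theorem 2.2 of Ref.~\cite{AmbainisNayakTaShmaVazirani1999}, the encoder first maps $x$ to a random string $y$. This is done by flipping each bit independently with probability $1-p$, or by choosing uniformly among strings at Hamming distance about $(1-p)n$, so that each coordinate is correct with probability exactly $p$. The encoder then transmits a codeword of a covering code of radius about $(1-p)n$. A random permutation of the coordinates, chosen by the encoder as private randomness, symmetrizes the per-coordinate error so that the worst case over $i$ equals the average.

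The covering code has size at most $2^{n}/\binom{n}{\le(1-p)n}\cdot O(n)$. Using $\binom{n}{pn}\ge 2^{nh_2(p)}/(n+1)$ and the $O(n\log n)$ factor from the greedy or random covering argument, one gets $k\le(1-h_2(p))n+c\log_2 n$. The target is $c\le 7$ for every admissible $n$. The hypothesis $p\le1-1/n$ ensures that $(1-p)n\ge1$, so the Hamming ball is nontrivial and the binomial estimate applies. This is the step I expect to be the main obstacle: tracking the constants in the covering lemma and the rounding of $(1-p)n$ to an integer precisely enough to land under $7\log_2 n$ without the phrase ``sufficiently large''. If the bookkeeping fails, I would fall back on citing the explicit bound of Theorem 5.2 in Ref.~\cite{AmbainisNayakTaShmaVazirani2002} under exactly the hypothesis $p\le1-1/n$.

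Second, I would pad the messages to length exactly $K_{p,n}$ by appending zeros. The decoders ignore the extra bits, so the success probabilities are unchanged. This gives an alphabet $\mathcal C$ with $\lceil\log_2|\mathcal C|\rceil=K_{p,n}$. Since $K_{p,n}<ns^2<n$, we also have $K_{p,n}<n$, as the QRAC model requires.

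Finally, I would apply Proposition~\ref{prop:embedding} with $M=K_{p,n}$. The hypothesis $K_{p,n}<ns^2$ is exactly criterion~\eqref{eq:general-criterion}, so the resulting $(n,K_{p,n},p)$ QRAC satisfies $p=\tfrac12+\tfrac s2>\tfrac12+\tfrac12\sqrt{K_{p,n}/n}$. This strictly violates Eq.~\eqref{eq:conjbound}.
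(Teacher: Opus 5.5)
Your outer structure matches the paper: an explicit finite-$n$ ANTV bound, padding to $K_{p,n}$ bits, then Proposition~\ref{prop:embedding} with $M=K_{p,n}$ and $K_{p,n}<ns^2<n$. Those final steps are fine. The gap is in the first step, which carries the whole content of the proposition, and the construction you describe would not give a worst-case code of the stated length.

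\begin{itemize}
\item \emph{The permutation has to reach the decoder.} You let the encoder choose $\pi$ as private randomness, but the decoder must know $\pi$ to locate coordinate $i$. Transmitting an arbitrary permutation costs $\log_2 n!\approx n\log_2 n$ bits, which destroys the rate.
\item \emph{A permutation alone does not handle worst-case inputs.} Nothing prevents the covering map from sending every weight-one string to $0^n$. Then for $x$ with a single $1$ at position $i$, bit $i$ is decoded wrongly for every $\pi$.
\item \emph{The preliminary noise step hurts.} Read literally, quantizing a noisy copy $y$ of $x$ to a covering code of radius about $(1-p)n$ stacks two error sources. In the symmetric case the per-bit success falls to about $p^2+(1-p)^2<p$.
\end{itemize}

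The missing ingredient is the ANTV symmetrize-then-sparsify step that the paper's proof relies on.

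\begin{itemize}
\item \emph{Symmetrize.} XOR a uniform mask $r$ and then permute, so that $\pi(x\oplus r)$ is uniform and independent of $\pi$. Every pair $(x,i)$ then has error probability equal to the average distortion of the covering map, at most $R/n$ for covering radius $R$.
\item \emph{Sparsify.} Sample $n^3$ pairs $(\pi,r)$. For each of the $n2^n$ pairs $(x,i)$, a Chernoff bound gives deviation $1/n$ with failure probability $e^{-2n}$. Since $n2^ne^{-2n}<1$, one fixed family works for all $(x,i)$.
\item \emph{Message.} Send the covering-code index (at most $(1-\hbin(p))n+4\log_2 n$ bits) together with the index of the chosen pair ($3\log_2 n$ bits).
\end{itemize}

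This sparsification is what makes the randomness affordable, and it is exactly where the $7\log_2 n$ overhead comes from. The $1/n$ Chernoff deviation must be absorbed by shrinking the covering radius. That is the natural role of $p\le1-1/n$, rather than just making the Hamming ball nontrivial.

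Your fallback of citing ANTV under exactly this hypothesis does not close the gap. As used in the paper, the cited theorem is an $O(\log_2 n)$ statement, recorded in Eq.~\eqref{eq:ANTV-length} only for sufficiently large $n$. The bound valid for every admissible $n$ has to be read off from the construction itself.
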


\begin{proofsketch}
The finite-blocklength ANTV construction produces a classical RAC with deterministic decoders whose message alphabet embeds into $K_{p,n}$ bits.  Proposition~\ref{prop:embedding} converts it into the claimed QRAC, and the strict rate condition ensures the violation.
\end{proofsketch}

Applying Proposition~\ref{prop:finite-block} with a bias that vanishes with $n$ yields the logarithmic-qubit refinement.

\begin{corollary}[Logarithmic qubit count]
\label{cor:logarithmic}
Let $a=1-1/(2\ln 2)$, fix $C>7/a=25.120896\ldots$, and set the recovery-bias sequence $s_n=\sqrt{C\log_2 n/n}$ and $p_n=(1+s_n)/2$.
For all sufficiently large $n$, there exists an $(n,K_{p_n,n},p_n)$ QRAC that violates Eq.~\eqref{eq:conjbound} with $K_{p_n,n}=\Theta(\log_2 n)$, where $\Theta$ denotes asymptotic order.
\end{corollary}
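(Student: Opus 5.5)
The plan is to apply Proposition~\ref{prop:finite-block} with $p=p_n$ and check its two hypotheses, $p_n\le 1-1/n$ and $K_{p_n,n}<ns_n^2$, for all large $n$. The first is immediate: $s_n\to0$ gives $p_n\to1/2$, so $p_n\le1-1/n$ once $n\ge3$ and $s_n<1-2/n$.

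The core step is a small-bias expansion of $1-\hbin(p)$. Write $p=(1+s)/2$. Then
\begin{equation*}
 1-\hbin\!\left(\tfrac{1+s}{2}\right)=\frac{1}{2\ln 2}\left[(1+s)\ln(1+s)+(1-s)\ln(1-s)\right]=\frac{1}{2\ln2}\sum_{j\ge1}\frac{s^{2j}}{j(2j-1)} .
\end{equation*}
This gives $1-\hbin(p)=s^2/(2\ln2)+O(s^4)$, so $1-\hbin(p)\le (1-a)s^2+O(s^4)$ with $1-a=1/(2\ln2)$. Substituting $s_n^2=C\log_2 n/n$ yields
\begin{equation*}
 K_{p_n,n}\le (1-a)C\log_2 n + O\!\left(\frac{(\log_2 n)^2}{n}\right)+7\log_2 n+1 .
\end{equation*}
The target is $ns_n^2=C\log_2 n$. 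The difference is
\begin{equation*}
 C\log_2 n-K_{p_n,n}\ge (aC-7)\log_2 n-1-o(1),
\end{equation*}
which tends to $+\infty$ because $C>7/a$. Hence $K_{p_n,n}<ns_n^2$ for large $n$, and Proposition~\ref{prop:finite-block} supplies the violating $(n,K_{p_n,n},p_n)$ QRAC.

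For the order claim, the upper bound $K_{p_n,n}\le C\log_2 n$ has just been shown. For the lower bound, $K_{p_n,n}\ge 7\log_2 n$ since $1-\hbin\ge0$. Together these give $K_{p_n,n}=\Theta(\log_2 n)$. The same bounds also give $K_{p_n,n}<n$.

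The main obstacle is that the ANTV finite-blocklength bound, Eq.~\eqref{eq:ANTV-length}, is stated for fixed $p$ at sufficiently large $n$, while here $p$ varies with $n$. Proposition~\ref{prop:finite-block} is phrased to avoid this: it needs only $p\le1-1/n$, with the $7\log_2 n$ term uniform in $p$. I will rely on it as stated rather than on Theorem~\ref{thm:fixed-p}. Any residual concern about uniformity belongs to the appendix proof of that proposition, not to this corollary. The only other care needed is in controlling the $O(s^4)$ tail explicitly. For $s\le1/2$, the series is bounded by $s^2/(2\ln2)+s^4/\ln2$, say, so the error is $O((\log_2 n)^2/n)=o(1)$.
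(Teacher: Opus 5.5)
Your proposal is correct and follows the paper's proof: both apply Proposition~\ref{prop:finite-block} with $p=p_n$, check $p_n\le 1-1/n$, and use the series for $1-\hbin((1+s)/2)$ to show $ns_n^2-K_{p_n,n}\geq(aC-7)\log_2 n-O(1)\to\infty$. The only difference is in the order claim: you use the elementary bounds $7\log_2 n\le K_{p_n,n}<C\log_2 n$, which suffice for the statement, while the paper computes the leading constant $C/(2\ln 2)+7$ and also invokes Nayak's bound $m\geq[1-\hbin(p_n)]n$ to show that no QRAC at this bias can use $o(\log_2 n)$ qubits, i.e.\ order-optimality, which is more than the statement requires.
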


\begin{proofsketch}
The entropy expansion together with $C>7/a$ eventually yields $K_{p_n,n}<ns_n^2$, so Proposition~\ref{prop:finite-block} applies.  The expansion gives $(C/(2\ln 2)+7)\log_2 n+o(\log_2 n)$, while Nayak's bound supplies a lower bound of the same asymptotic order.
\end{proofsketch}

The preceding results describe the construction through recovery bias and input length.  Reparameterizing the asymptotic family by its compression rate organizes these counterexamples into a region in the $(r,p)$ plane.

\section{Rate-window formulation}
\label{sec:rate-window}

For a fixed asymptotic compression rate $r\in(0,1)$, consider QRAC families satisfying $m/n\to r$.  The conjectured bound defines
\begin{equation}
 p_{\rm sq}(r)=\frac12+\frac12\sqrt r .
 \label{eq:psq}
\end{equation}
The asymptotic Nayak bound is
\begin{equation}
 p_{\rm N}(r)=\hbin^{-1}(1-r),
 \label{eq:pN}
\end{equation}
where the inverse is taken on $[1/2,1]$.  Equation~\eqref{eq:entropy-gap-intro} gives $p_{\rm sq}(r)<p_{\rm N}(r)$ throughout this rate range.  The open interval between these curves is
\begin{equation}
 p_{\rm sq}(r)<p<p_{\rm N}(r).
 \label{eq:open-window}
\end{equation}
Parameterizing the same ANTV family by $r$ and padding to $\lceil rn\rceil$ yields the following rate-window corollary.

\begin{corollary}[Rate window]
\label{cor:window}
Fix $r\in(0,1)$ and $p$ satisfying Eq.~\eqref{eq:open-window}.  Then, for all sufficiently large $n$, there is an $(n,M_n,p)$ QRAC with
\begin{equation}
 M_n=\lceil rn\rceil
 \label{eq:exact-rate-Mn}
\end{equation}
and with $p>1/2+\sqrt{M_n/n}/2$.
\end{corollary}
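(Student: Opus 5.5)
We plan to combine the ANTV message length bound \eqref{eq:ANTV-length} with padding, and then apply Proposition~\ref{prop:embedding}. We take $p$ with $p_{\rm sq}(r)<p<p_{\rm N}(r)$ to be fixed. The two inequalities are used separately.

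The upper inequality $p<p_{\rm N}(r)$ gives achievability. Because $\hbin$ is strictly decreasing on $[1/2,1]$, $p<\hbin^{-1}(1-r)$ is equivalent to $\hbin(p)>1-r$, that is, $1-\hbin(p)<r$. Set $\delta:=r-(1-\hbin(p))>0$. The ANTV code at this fixed $p$ has
\begin{equation*}
 k_n\le\lceil(1-\hbin(p))n+7\log_2 n\rceil\le (r-\delta)n+7\log_2 n+1 .
\end{equation*}
Since $\log_2 n/n\to0$, this is at most $rn\le\lceil rn\rceil=M_n$ for all sufficiently large $n$. We then enlarge the message alphabet of the classical RAC from $2^{k_n}$ to $2^{M_n}$ elements by adding messages that are never sent. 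On these messages we extend the decoders $d_i$ arbitrarily, for example by the constant $0$. This leaves every success probability $\sum_c P(c|x)q_i(x_i|c)$ unchanged, so the worst-case success is still at least $p$. Proposition~\ref{prop:embedding}, applied with $|\mathcal C|=2^{M_n}$ and hence $M=M_n$, then gives an $(n,M_n,p)$ QRAC. Equivalently, we could append $M_n-k_n$ qubits in a fixed state $\ket{0}$ and act trivially on them in the decoders.

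The lower inequality $p>p_{\rm sq}(r)$ gives the violation. It is equivalent to $s^2>r$, where $s=2p-1$. Since $\lceil rn\rceil/n\le r+1/n$, which tends to $r$, we have $M_n/n<s^2$ for all large $n$. This is the criterion \eqref{eq:general-criterion}, so $p>1/2+\sqrt{M_n/n}/2$. We take $n$ at least the maximum of the two thresholds. We also need $M_n<n$, which follows from $r<1$ and large $n$, so that the QRAC lies in the model considered.

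The main point requiring care is that padding preserves both the worst-case success and the qubit count exactly. Padding at the classical level before invoking Proposition~\ref{prop:embedding} handles this cleanly, because $\lceil\log_2 2^{M_n}\rceil=M_n$. The rest consists of strict-inequality limit arguments with fixed margins $\delta$ and $s^2-r$.
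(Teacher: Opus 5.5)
Your proposal is correct and follows essentially the same route as the paper's proof. The upper inequality gives $1-\hbin(p)<r$, so the ANTV length $k_n$ eventually fits under $M_n=\lceil rn\rceil$; padding and Proposition~\ref{prop:embedding} then give the $(n,M_n,p)$ QRAC, and the lower inequality gives $r<s^2$, hence $M_n/n<s^2$ for large $n$. Your classical-level padding with never-sent messages is equivalent to the paper's appending of fixed zero bits, and your explicit checks of the margins and of $M_n<n$ are correct additions.
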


\begin{proofsketch}
The upper inequality in Eq.~\eqref{eq:open-window} places the ANTV message length below $M_n$ asymptotically and permits padding to $M_n$ bits.  The lower inequality gives $M_n/n<s^2$ after rounding, so Proposition~\ref{prop:embedding} yields the stated QRACs.
\end{proofsketch}

The rate-window corollary identifies the full asymptotic counterexample region.  The realization structure determines how these strict violations persist under perturbations.

\section{Structural origin and robustness}
\label{sec:robustness}

For an $M$-qubit embedded realization, the diagonal density operators form a simplex whose $2^M$ vertices correspond to the computational-basis messages.  Each classical message distribution maps to a point in this simplex.  Taking square-root amplitudes gives a pure-state representation with the same computational-basis statistics.  For deterministic decoders, each decoding function defines a diagonal projective measurement, so the decoding statistics arise from a one-way classical protocol with messages of at most $M$ bits.  The separation from the conjectured bound follows from the achievable rate of classical RACs with private randomness.

\begin{proposition}[Robustness]
\label{prop:robustness}
Let $\rho_x$, $x\in\bits^n$, be the encoding states of an $(n,M,p)$ QRAC, and let $D_i^b$, $i\in[n]$ and $b\in\bits$, be its decoding POVM effects.  Define its gap by $\delta:=p-1/2-\sqrt{M/n}/2$ and assume $\delta>0$.  Every family of valid encoding states $\widetilde\rho_x$ and decoding POVM effects $\widetilde D_i^b$ on the same $2^M$-dimensional Hilbert space satisfying
\begin{equation}
 \max_{x,i,b}\left[
 \frac12\left\|\widetilde\rho_x-\rho_x\right\|_1
 +\left\|\widetilde D_i^b-D_i^b\right\|_\infty
 \right]<\delta
 \label{eq:robustness-neighborhood}
\end{equation}
also violates Eq.~\eqref{eq:conjbound}, where $\|\cdot\|_1$ and $\|\cdot\|_\infty$ denote the trace norm and operator norm.  If all $D_i^b$ are nontrivial projectors and $n\geq2$, every operator-norm neighborhood of these decoding measurements contains projective decoding measurements satisfying $\|[\widetilde D_i^b,\widetilde D_j^{b'}]\|_\infty>0$ for all $i\ne j$ and $b,b'\in\bits$.  The projective decoding measurements obtained by embedding the ANTV codes satisfy this nontriviality condition.
\end{proposition}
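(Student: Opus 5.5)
The plan has three parts: a perturbation estimate for the success probabilities, an explicit rotation construction producing noncommuting projectors, and a check that the embedded ANTV decoders meet the hypothesis.

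\emph{Perturbation estimate.} Fix $x$ and $i$, and set $b=x_i$. I would split
\[
\Tr(\widetilde\rho_x\widetilde D_i^b)-\Tr(\rho_xD_i^b)=\Tr\bigl((\widetilde\rho_x-\rho_x)D_i^b\bigr)+\Tr\bigl(\widetilde\rho_x(\widetilde D_i^b-D_i^b)\bigr).
\]
For the first term I use that $0\le D_i^b\le I$ and that $\widetilde\rho_x-\rho_x$ is traceless Hermitian. Decomposing it into positive and negative parts of equal trace $\tfrac12\|\widetilde\rho_x-\rho_x\|_1$ bounds its modulus by that quantity. For the second term, Hölder's inequality with $\|\widetilde\rho_x\|_1=1$ gives a bound of $\|\widetilde D_i^b-D_i^b\|_\infty$. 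Equation~\eqref{eq:robustness-neighborhood} therefore makes each success probability exceed $\Tr(\rho_xD_i^b)-\delta'$, where $\delta'<\delta$ is the maximum on the left-hand side. Because the index set is finite, the maximum is attained, so the strict inequality is uniform. The new worst-case success $\widetilde p\ge p-\delta'>1/2+\sqrt{M/n}/2$, and the perturbed family still violates Eq.~\eqref{eq:conjbound}.

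\emph{Noncommuting projectors.} For the second claim I would build the perturbation explicitly by unitary conjugation: $\widetilde D_i^b=U_iD_i^bU_i^\dagger$ with $U_i=e^{i\epsilon H_i}$ for Hermitian $H_i$. These are projective POVMs, and they lie within $O(\epsilon)$ of $D_i^b$ in operator norm. Since $D_i^1=I-D_i^0$, noncommutation of $\widetilde D_i^0$ with $\widetilde D_j^0$ gives all four pairs $b,b'$ at once. So it suffices to show that, generically in $(H_1,\dots,H_n)$, $[U_iP_iU_i^\dagger,U_jP_jU_j^\dagger]\neq0$ for all $i\neq j$, where $P_i=D_i^0$ is a nontrivial projector. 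For fixed $i\neq j$, the set of $(H_i,H_j)$ making the commutator vanish is the zero set of a real-analytic function of $(\epsilon H_i,\epsilon H_j)$. This is the step I expect to be the main obstacle: showing that this function is not identically zero on every neighborhood of $0$.

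\emph{Handling the obstacle.} I would first fix $H_i=0$ and vary $H_j$ alone. Because $P_i$ is nontrivial, its range $R$ is a proper nonzero subspace. The orbit $\{UP_jU^\dagger\}$ near $P_j$ is a Grassmannian neighborhood, and it contains projectors of the same rank whose range is not compatible with $R$. Such a range is neither a sum of a subspace of $R$ and a subspace of $R^\perp$, so the rotated projector does not commute with $P_i$. Concretely, pick unit vectors $u\in R$ and $v\in R^\perp$, and take $e_1$ in the range of $P_j$ and $e_2$ outside it. I would rotate slightly so that a range vector acquires both $u$ and $v$ components with a generic phase relation; this destroys $P_i$-invariance. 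Nonvanishing at one point makes the analytic function not identically zero, so its zero set is nowhere dense. A finite union over the pairs $(i,j)$ of nowhere-dense closed sets cannot fill a neighborhood, so every neighborhood contains a good $(H_1,\dots,H_n)$, and $\epsilon$ can be taken small enough for the norm condition.

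\emph{ANTV verification.} Finally, I check that the ANTV-embedded decoders are nontrivial. Each $D_i^b$ is the diagonal projector onto $\{c:d_i(c)=b\}$, so it is trivial only if $d_i$ is constant. A constant decoder would give success at most $1/2$ on one of $x_i=0$ or $x_i=1$, which contradicts $p>1/2$. Hence every $D_i^b$ is a nontrivial projector.
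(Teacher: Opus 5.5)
Your proposal is correct and follows the paper's proof essentially step for step. It uses the same trace-norm and operator-norm perturbation bound, the same genericity argument via unitary conjugation and nowhere-dense real-analytic commutator zero sets, and the same argument that a constant decoder $d_i$ would contradict $p>1/2$. Two differences are minor: you parametrize by Hermitian generators rather than the product unitary group, and you sketch why some unitary conjugate of one nontrivial projector fails to commute with another, which the paper only asserts. There is one cosmetic slip: in the embedding of Proposition~\ref{prop:embedding}, $D_i^0$ also contains the projector $\Pi_\perp$ onto unused basis states. Nonconstancy of $d_i$ still makes both $D_i^0$ and $D_i^1$ nontrivial, so your conclusion stands.
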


\begin{proofsketch}
Trace-distance and operator-norm continuity preserve the strict success gap under Eq.~\eqref{eq:robustness-neighborhood}.  Worst-case success makes every ANTV decoding function attain both output values, and independent small unitary conjugations of nontrivial projectors avoid the finite union of real-analytic commutator zero sets.
\end{proofsketch}

Each embedded pure-state ANTV realization therefore lies in a relative open subset of valid state-POVM tuples that violate Eq.~\eqref{eq:conjbound}, and every operator-norm neighborhood of its decoding measurements contains a projective tuple whose effects are noncommuting across every pair of distinct decoding indices.

The formal results admit direct illustrations in the $(r,p)$ plane and at finite blocklength.

\section{Examples and parameter-space illustration}
\label{sec:examples}

Figure~\ref{fig:window} displays the full open rate window of Corollary~\ref{cor:window}.  Its markers correspond to the finite-blocklength guarantees from Proposition~\ref{prop:finite-block} listed in Table~\ref{tab:finite}.

\begin{figure}[t]
\includegraphics[width=\columnwidth]{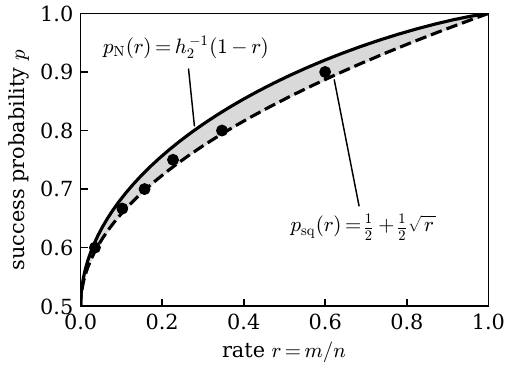}
\caption{Counterexamples fill the asymptotic interval between the conjectured and Nayak bounds.  For every rate $r$ and every $p$ in the shaded open window, $(n,\lceil rn\rceil,p)$ QRACs exist for all sufficiently large $n$ by Corollary~\ref{cor:window}.  The dashed and solid curves show $p_{\rm sq}(r)$ and $p_{\rm N}(r)$, respectively.  Markers show the parameter values of Table~\ref{tab:finite}.}
\label{fig:window}
\end{figure}

A representative finite-blocklength example illustrates the guarantee of Proposition~\ref{prop:finite-block}.
For $p=3/4$ and $n=4096$, Eq.~\eqref{eq:finite-antv-bound} gives
\begin{align}
 m=K_{3/4,4096}&=\left\lceil (1-\hbin(3/4))4096+7\log_2 4096\right\rceil \nonumber \\
  &=858 .
\end{align}
At this success probability, $s=1/2$.  The hypotheses of Proposition~\ref{prop:finite-block} hold because $3/4\leq1-1/4096$ and $858<4096s^2=1024$.  The proposition therefore gives a $(4096,858,3/4)$ QRAC with
\begin{equation}
 p_{\rm sq}(858/4096)=0.7288409143\ldots<\frac34 .
\end{equation}

Table~\ref{tab:finite} extends this calculation to six additional parameter choices.

\begin{table}[t]
\caption{Finite-blocklength QRAC existence guarantees from Proposition~\ref{prop:finite-block}.  Each row sets $m=K_{p,n}$ and lists the gap above $p_{\rm sq}(m/n)$.}
\label{tab:finite}
\begin{ruledtabular}
\begin{tabular}{cccc}
$p$ & $n$ & $m$ & $p-p_{\rm sq}(m/n)$ \\
\hline
$0.6000$ & $16384$ & $574$ & $0.006413$ \\
$2/3$ & $4096$ & $419$ & $0.006749$ \\
$0.7000$ & $2048$ & $321$ & $0.002049$ \\
$0.7500$ & $2048$ & $464$ & $0.012007$ \\
$0.8000$ & $1024$ & $355$ & $0.005602$ \\
$0.9000$ & $1024$ & $614$ & $0.012828$

\end{tabular}
\end{ruledtabular}
\end{table}
\FloatBarrier

\section{Discussion and conclusion}
\label{sec:discussion}

The combined achievability and converse results determine the asymptotic meaning of the two curves in Fig.~\ref{fig:window}.  At every fixed rate $r$, the ANTV construction approaches $p_{\rm N}(r)$ from below, and Nayak's bound supplies the matching converse.  The entropy curve $p_{\rm N}(r)$ therefore gives the asymptotic supremum of the worst-case success probability in the unrestricted QRAC model.  The square-root curve $p_{\rm sq}(r)$ is attained by several structured families and lies strictly inside the feasible region at every fixed rate.  Thus the general density-operator model is governed asymptotically by the entropy rate of its embedded classical message sector.

The realizations identify structural features that preserve this separation.  For the paired realizations of Proposition~\ref{prop:embedding}, the diagonal mixed states and pure states have identical decoding statistics and the same embedded classical message rate.  Proposition~\ref{prop:robustness} places projective decoding measurements with noncommuting effects across every pair of distinct decoding indices in every operator-norm neighborhood of the commuting realization.  The success gap is stable under perturbations that change the measurement commutators.  Measurement incompatibility is known to be necessary for quantum advantage over classical RAC strategies under an average-success comparison~\cite{CarmeliHeinosaariToigo2020}.  Relevant quantitative parameters include decoder spectra, measurement overlaps, and the minimum classical message-alphabet size required to reproduce the decoding statistics.  This interpretation complements the spectral formulation of Akibue \textit{et al.}, whose finite-size bounds and equality conditions are expressed through decoding-measurement geometry~\cite{AkibueRaymondTamakiTeramoto2026}.

The high-rate constructions of Suzuki and Kondo \textit{et al.} show that the square-root value is attained within specified families.  Suzuki establishes average-case saturation for $(n,n-1)$ QRACs~\cite{Suzuki2026}.  Tan and Jafar prove the conjectured average-success upper bound for the $(n,n-1)$ and $(n,n-2)$ cases~\cite{TanJafar2026}.  Kondo \textit{et al.} recover related high-rate QRACs and develop geometric characterizations of classical RACs under average- and worst-case criteria~\cite{KondoEtAl2026}.  The rate-window corollary places these high-rate results in a broader parameter space, where classical RACs with private randomness reach every fixed-rate point below the entropy boundary.  The finite-parameter classical-quantum gaps reported by Kondo \textit{et al.} and the present asymptotic classical achievability together show that RAC-QRAC comparisons depend on the scaling regime and success criterion.

Success criterion also connects the present results to general QRAC bounds.  Farkas, Miklin, and Tavakoli bound average success for arbitrary input alphabet and Hilbert-space dimension~\cite{FarkasMiklinTavakoli2025}.  Every worst-case guarantee obtained here is also an average-case guarantee, so the embedded ANTV codes provide explicit feasible points for that broader optimization problem.  Their framework constrains average-case performance at finite dimension.  The present construction supplies asymptotically tight achievable worst-case rates in the binary setting.

The counterexample families have $m\to\infty$, with the logarithmic construction showing that growth of the embedded classical message space already produces a separation at $m=\Theta(\log_2 n)$.  Existing fixed-$m$ theorems and finite small-block constructions retain their stated parameter domains.  Natural restricted formulations may combine fixed-$m$ regimes with quantitative spectral regularity of the decoding measurements.  Identifying quantitative conditions that contain the known square-root-saturating families and control the embedded classical rate remains a structural question.  Within the standard density-operator model, the asymptotic worst-case boundary is $p_{\rm N}(r)$, and robust counterexamples occur throughout the open region above $p_{\rm sq}(r)$.

\appendix
\section{Detailed proofs}
\label{app:proofs}

\begin{proof}[Proof of Proposition~\ref{prop:embedding}]
The choice $M=\lceil\log_2 |\mathcal C|\rceil$ permits an injection of $\mathcal C$ into the computational basis.  Let $I$ denote the identity operator and define
\begin{equation}
 \Pi_\perp=I-\sum_{c\in\mathcal C}\proj c,
\end{equation}
where $\Pi_\perp$ projects onto the unused basis states.  Define
\begin{align}
 \rho_x&=\sum_{c\in\mathcal C}P(c|x)\proj c,
 \label{eq:diagonal-state}\\
 D_i^b&=\sum_{c\in\mathcal C}q_i(b|c)\proj c
       +\mathbf 1\{b=0\}\Pi_\perp,
 \quad b\in\bits.
 \label{eq:diagonal-povm-general}
\end{align}
Each $\rho_x$ is positive semidefinite and has unit trace.  The normalization of $q_i(b|c)$ gives
\begin{equation}
 D_i^0+D_i^1=\sum_{c\in\mathcal C}\proj c+\Pi_\perp=I .
\end{equation}
The coefficients of the basis projectors lie in $[0,1]$, so $0\leq D_i^b\leq I$.  All POVM effects are diagonal and hence commute.  Direct evaluation gives
\begin{equation}
 \Tr(\rho_xD_i^{x_i})=
 \sum_{c\in\mathcal C}P(c|x)q_i(x_i|c).
 \label{eq:embedding-equality-general}
\end{equation}

For a deterministic decoder, every coefficient in Eq.~\eqref{eq:diagonal-povm-general} lies in $\{0,1\}$.  The two effects are then complementary orthogonal projectors.  Define
\begin{equation}
 \ket{\psi_x}=\sum_{c\in\mathcal C}\sqrt{P(c|x)}\ket c
 \label{eq:coherent-state}
\end{equation}
The vectors are normalized, and diagonality gives
\begin{equation}
 \bra{\psi_x}D_i^{x_i}\ket{\psi_x}
 =\sum_{c\in\mathcal C}P(c|x)q_i(x_i|c).
 \label{eq:pure-equality}
\end{equation}
Thus the realizations with diagonal encoding states and pure encoding states have the same decoding statistics.  Finally, Eq.~\eqref{eq:general-criterion} is equivalent to $1/2+\sqrt{M/n}/2<p$.
\end{proof}

\begin{proof}[Proof of Proposition~\ref{prop:converse}]
The commuting Hermitian effects admit a common orthonormal eigenbasis $\{\ket c\}_{c=1}^{2^M}$.  In this basis, define
\begin{equation}
 P(c|x)=\bra c\rho_x\ket c,
 \qquad
 q_i(b|c)=\bra cD_i^b\ket c .
\end{equation}
Positivity and normalization of the states make $P(\cdot|x)$ a probability distribution.  The POVM relations make $q_i(\cdot|c)$ a probability distribution.  Since each $D_i^b$ is diagonal in the common basis,
\begin{equation}
 \Tr(\rho_xD_i^b)=
 \sum_{c=1}^{2^M}P(c|x)q_i(b|c),
\end{equation}
which reproduces all decoding probabilities with a classical message alphabet of size $2^M$.
\end{proof}

\begin{proof}[Derivation of Eq.~\eqref{eq:entropy-gap-intro}]
Since $s\in(0,1)$, the binary entropy deficit has the convergent expansion
\begin{equation}
 1-\hbin \left(\frac{1+s}{2}\right)
 =\frac{1}{\ln 2}\sum_{j=1}^{\infty}
 \frac{s^{2j}}{(2j-1)(2j)} .
 \label{eq:entropy-series}
\end{equation}
The coefficient sum satisfies
\begin{equation}
 \frac{1}{\ln 2}\sum_{j=1}^{\infty}
 \frac{1}{(2j-1)(2j)}=1,
 \label{eq:series-one}
\end{equation}
because $[(2j-1)(2j)]^{-1}=(2j-1)^{-1}-(2j)^{-1}$ and the resulting series sums to $\ln 2$.  In this range, every $s^{2j}$ is at most $s^2$, and the inequality is strict for $j\geq2$.  Equations~\eqref{eq:entropy-series} and \eqref{eq:series-one} therefore give
\begin{equation}
 1-\hbin \left(\frac{1+s}{2}\right)<s^2 .
 \label{eq:gap-proof}
\end{equation}
Since $p=(1+s)/2$, this yields Eq.~\eqref{eq:entropy-gap-intro}.
\end{proof}

\begin{proof}[Proof of Theorem~\ref{thm:fixed-p}]
For fixed $p$, define
\begin{equation}
 \Delta_p:=s^2-[1-\hbin(p)]>0,
\end{equation}
where positivity follows from Eq.~\eqref{eq:entropy-gap-intro}.  Since $(7\log_2 n+1)/n\to0$, it is smaller than $\Delta_p$ for all sufficiently large $n$.  In this range, the ANTV classical RAC exists with the length in Eq.~\eqref{eq:ANTV-length} and satisfies
\begin{equation}
 \frac{k_n}{n}
 \leq 1-\hbin(p)+\frac{7\log_2 n+1}{n}
 <s^2<1 .
\end{equation}
Proposition~\ref{prop:embedding} yields the claimed QRAC and the strict violation of Eq.~\eqref{eq:conjbound}.
\end{proof}

\begin{proof}[Proof of Proposition~\ref{prop:finite-block}]
For $p\leq1-1/n$, the finite-blocklength ANTV construction uses a covering-code index of at most $(1-\hbin(p))n+4\log_2 n$ bits and an index for $n^3$ permutation-mask pairs of length $3\log_2 n$.  The total message length is therefore at most $(1-\hbin(p))n+7\log_2 n$ bits.  The decoders are deterministic.  A Chernoff bound of $e^{-2n}$ applies to each pair $(x,i)$, and the union bound over the $n2^n$ pairs gives total failure probability at most $n2^ne^{-2n}<1$ for every positive integer $n$.  Hence a fixed family of permutation-mask pairs attains the required worst-case success.

Consequently, the message alphabet obeys
\begin{equation}
 \left\lceil\log_2 |\mathcal C|\right\rceil
 \leq K_{p,n} .
\end{equation}
Padding to $K_{p,n}$ bits preserves the decoding statistics.  Under $K_{p,n}<ns^2$, Proposition~\ref{prop:embedding} gives the claimed QRAC and the strict violation of Eq.~\eqref{eq:conjbound}.
\end{proof}

\begin{proof}[Proof of Corollary~\ref{cor:logarithmic}]
Equation~\eqref{eq:entropy-series} gives
\begin{equation}
 \Delta_{p_n}:=s_n^2-[1-\hbin(p_n)]
 =aC\frac{\log_2 n}{n}
 +O\!\left(\frac{(\log_2 n)^2}{n^2}\right).
\end{equation}
Since $aC>7$, $n\Delta_{p_n}-7\log_2 n\to\infty$, which gives $K_{p_n,n}<ns_n^2$ for all sufficiently large $n$.  In the same range, $p_n\leq1-1/n$, so Proposition~\ref{prop:finite-block} gives the stated QRAC.  The expansion
\begin{equation}
 1-\hbin(p_n)=\frac{s_n^2}{2\ln 2}+O(s_n^4)
\end{equation}
yields
\begin{equation}
 K_{p_n,n}=\left(\frac{C}{2\ln 2}+7\right)\log_2 n+o(\log_2 n).
\end{equation}
For any QRAC with success $p_n$, Nayak's bound~\cite{Nayak1999} gives
\begin{equation}
 m\geq[1-\hbin(p_n)]n
 =\left(\frac{C}{2\ln 2}+o(1)\right)\log_2 n .
\end{equation}
The construction and lower bound establish order-optimal logarithmic qubit scaling at this bias.
\end{proof}

\begin{proof}[Proof of Corollary~\ref{cor:window}]
The condition $p<\hbin^{-1}(1-r)$ is equivalent to $1-\hbin(p)<r$.  The ANTV length $k_n$ in Eq.~\eqref{eq:ANTV-length} is therefore at most $M_n$ for all sufficiently large $n$.  Padding each classical message with $M_n-k_n$ fixed zero bits and applying Proposition~\ref{prop:embedding} gives an $(n,M_n,p)$ QRAC.  The lower inequality in Eq.~\eqref{eq:open-window} gives $r<s^2$.  Hence $M_n/n<s^2$ for all sufficiently large $n$, and the padded QRAC violates Eq.~\eqref{eq:conjbound} at the exact qubit count $M_n$.
\end{proof}

\begin{proof}[Proof of Proposition~\ref{prop:robustness}]
For $0\leq\widetilde D_i^b\leq I$, the trace-distance and operator-norm bounds give
\begin{align}
 \Tr(\widetilde\rho_x\widetilde D_i^{x_i})
 \geq{}& \Tr(\rho_xD_i^{x_i})
 -\frac12\|\widetilde\rho_x-\rho_x\|_1 \notag\\
 &-\|\widetilde D_i^{x_i}-D_i^{x_i}\|_\infty .
 \label{eq:robustness-success}
\end{align}
Equation~\eqref{eq:robustness-neighborhood} therefore preserves a success probability greater than $p-\delta=1/2+\sqrt{M/n}/2$.

For each $i\ne j$, define the real-analytic map
\begin{equation}
 \mathcal F_{ij}(U_1,\ldots,U_n)
 =[U_iD_i^0U_i^\dagger,U_jD_j^0U_j^\dagger]
\end{equation}
on the product unitary group.  For any two nontrivial projectors $\Gamma$ and $\Lambda$, a unitary $V$ can be chosen such that $[\Gamma,V\Lambda V^\dagger]\ne0$.  Thus each $\mathcal F_{ij}$ takes a nonzero value.  The product unitary group is connected, and the zero set of each $\mathcal F_{ij}$ is a proper real-analytic subset with empty interior.  These zero sets are closed, so their finite union also has empty interior.  Every neighborhood of the identity tuple therefore contains a unitary tuple outside this union.  Continuity of unitary conjugation places the resulting tuple of decoding measurements in any prescribed operator-norm neighborhood of the original tuple.  Setting $\widetilde D_i^b=U_iD_i^bU_i^\dagger$ gives the stated commutator condition.  Since $D_i^1=I-D_i^0$, complementary binary effects change each commutator only by a sign.

For the embedded ANTV codes, fix $i\in[n]$ and $b\in\bits$, and choose an input $x$ with $x_i=b$.  The positive worst-case success probability implies $P(c|x)>0$ for some message $c$ satisfying $d_i(c)=b$.  Hence every decoding function $d_i$ attains both output values on $\mathcal C$.  Under the embedding of Proposition~\ref{prop:embedding}, both complementary projectors $D_i^0$ and $D_i^1$ consequently have ranks between $1$ and $2^M-1$.
\end{proof}

\bibliography{qrac_counterexample}

\end{document}